\newcommand{\N}[1]{N_{#1}} 
\newcommand{\Sv}{\mathbf{S}} 
\newcommand{\SvInit}{\Sv^0} 
\newcommand{\Si}[1]{\Sv_{#1}} 
\newcommand{\St}[2]{\Sv^{#2}_{#1}} 
\newcommand{\Bv}{\mathbf{B}} 
\newcommand{\BvInit}{\Bv^0} 
\newcommand{\Bi}[1]{\Bv_#1} 
\newcommand{\Bt}[2]{\Bv^{#2}_{#1}} 
\newcommand{\PBv}[1]{\mathtt{pub}\Bv^{#1}}
\newcommand{\Tol}{\tau} 
\newcommand{\Toli}[1]{\Tol_{#1}} 
\newcommand{\Maj}{\mathcal{M}}
\newcommand{\SOMminus}{\mbox{SOM}^-}
\newcommand{\SOMplus}{\mbox{SOM}^+}
\newcommand{\Prime}[1]{\bar{#1}}
\newcommand{\Nat}{\mathbb{N}}
\renewenvironment{proof}%
  {\begin{trivlist} \item[\hskip \labelsep {\bfseries Proof.}]}%
  {\hfill \qed \end{trivlist}}
\begin{document}
    \title{The Sound of Silence in Social Networks 
    \thanks{This work has been partially supported by the SGR project PROMUEVA (BPIN 2021000100160) under the supervision of Colombian Ministry of Science, Technology and Innovation (Minciencias) and by the CNRS project TOBIAS under the MITI interdisciplinary program.
}} 

    \author{Jes\'us Aranda\inst{1} \and  Juan Francisco D\'iaz \inst{1} \and David Gaona\inst{1} 
        \and
        Frank Valencia\inst{2,3} }

    \institute{Universidad del Valle, Colombia 
     \and CNRS-LIX, \'Ecole Polytechnique de Paris, France
     \and Pontificia Universidad Javeriana Cali, Colombia
    }
    
    \maketitle 

    \begin{abstract}
We generalize the classic multi-agent DeGroot framework for opinion dynamics by incorporating the Spiral of Silence theory from political science, which posits that individuals may withhold their opinions when they perceive them to be in the minority. As in the original DeGroot model, the social network is represented as a weighted directed graph encoding how agents influence one another. However, agents holding minority opinions become \emph{silent}, meaning they do not express their views.

We introduce two families of models. In \emph{Silence Opinion Memoryless} ($\SOMminus$) models, agents update their opinions by averaging those of their \emph{non-silent} neighbors. In \emph{Silence Opinion Memory-based} ($\SOMplus$) models, agents average the opinions of \emph{all} neighbors, but for silent ones, only the most recently expressed opinion is used. We show that $\SOMminus$ models guarantee consensus on clique graphs but, unlike the classic DeGroot model, not on all strongly connected aperiodic graphs. For $\SOMplus$ models, even cliques may fail to reach consensus, illustrating that even minimal memory can significantly affect opinion dynamics. Finally, we validate our models through large-scale simulations on small-world networks with over \emph{two million} agents. The results support  the Spiral of Silence theory and reveal inherent limitations to consensus in more realistic settings.
\end{abstract}


    \section{Introduction}
     \label{sec:introduction}
     Social networks play a significant role in \emph{opinion formation} often having strong impact on polarization and the ability to reach \emph{consensus}. Broadly, the dynamics of opinion formation in social networks involve users expressing their opinions, being exposed to the opinions of others, and potentially adapting their own views based on these interactions. Modeling these dynamics enables us to glean insights into how opinions form and spread within social networks. 

The  DeGroot framework \cite{degroot}  is one of the most prominent formalisms for opinion formation and consensus-building in social networks. In the models of this framework, a social network is represented as a weighted directed graph, where edges denote the degree to which individuals (i.e., \emph{agents}) influence one another. Each agent holds an opinion, expressed as a value in $[0,1]$, indicating their level of agreement with an underlying proposition (e.g., ``\emph{AI is a threat to humanity}"). Agents repeatedly update their opinions by taking the weighted average of their opinion differences with those who influence them (i.e., their \emph{contacts}). There is empirical evidence validating the opinion formation through averaging of the model in controlled sociological experiments \cite{degrootEmpirico2015}. 

\emph{Consensus}, i.e., convergence to a common opinion, is a central property in models of social learning and opinion formation \cite{survey}. In fact, difficulties in achieving consensus are a sign of a polarized society. A fundamental result for (classic) DeGroot models shows that agents converge to consensus if the influence graph is strongly connected and aperiodic. The DeGroot framework continues to be a focus of research for constructing frameworks for understanding opinion formation dynamics in social networks (e.g,.~\cite{survey,Alvim2024,alvim21,alvim:hal-03740263,Mossel2017,alvim:hal-03740263,Generalize1,Generalize2,Generalize3,GeneralizeBF1,demarzo2003persuasion,chatterjee1977towards}).

Nevertheless, the classic DeGroot formalism makes an assumption that could be overly constraining within social network contexts. It assumes that \emph{all agents express their opinions at each time unit}. This assumption, which renders models tractable, may hold in some controlled scenarios as participants are often encouraged to express their views freely and consistently. However, in many real-world situations, some individuals may choose not to express their opinions due to personal choice or social pressure.

Indeed, the \emph{Spiral of Silence} \cite{Noelle-Neumann1974} is a well-established social theory that describes how individuals may be unwilling to express their opinions when they perceive themselves to be in the minority. This reluctance can lead to the \emph{reinforcement of dominant views} within a social network. The theory asserts that individuals have a natural tendency to avoid social isolation and seek acceptance within their social groups. When people believe their opinions are unpopular or likely to be met with disapproval, they may opt to remain silent.

The relevance of the Spiral of Silence has been  validated in  social media environments. Large-scale empirical studies show that \emph{social media platforms do not offer alternative spaces for minority opinion expression}, but instead reinforce silencing behaviors~\cite{Hampton2014}. Meta-analyses confirm that the link between perceived opinion support and willingness to speak out remains strong in digital contexts~\cite{Matthes2018}. Recent work finds that users are particularly sensitive to perceived disagreement among their online contacts, leading to self-censorship that can \emph{spill over into offline behavior}~\cite{Hampton2014,Sohn2022}. During the COVID-19 pandemic, individuals frequently conformed publicly to dominant narratives while privately holding dissenting views, further illustrating the theory’s enduring relevance~\cite{Atanesyan2023}. These findings suggest that the dynamics of opinion suppression described by The Spiral of Silence not only persist but may be exacerbated in digital social networks.

In this paper, we generalize DeGroot models into a framework where agents may choose to remain silent at a given time following the Spiral of Silence. We consider two possibilities, leading to the two families of models described below.

\emph{Memoryless framework $\SOMminus$.} In these models, silent agents are excluded from the opinion updates of the agents they would otherwise influence. Additionally, agents become silent at a given time  if their views do not align with the majority of their non-silent contacts. This framework is called the \emph{silence opinion memoryless } ($\SOMminus$) models, as the previous opinions of silent agents are not retained. This corresponds to a social scenario in which opinions (e.g., expressed in posts) are removed once they have been accessed.

Notice that ignoring silent agents at a given time unit amounts to removing certain edges from the influence graph at that time. Thus, a fundamental distinction from DeGroot models is that $\SOMminus$ models exhibit \emph{dynamic influence} as edges may disappear and reappear during opinion evolution.

\emph{Memory-based framework $\SOMplus$.} In these models, agents choose to be silent if their opinion does not align with the most recent public opinions of \emph{the majority of} their contacts. Unlike in $\SOMplus$ models, silent agents \emph{are not excluded} from the opinion updates of the agents they influence: when their current opinion is unknown, their most recent public opinion is taken into account in the update.  This framework is called the \emph{silence opinion memory-based} ($\SOMplus$) models, as the most recent opinion of each agent is retained.

 A property that distinguishes $\SOMplus$ models from classic  DeGroot models (and $\SOMminus$ models) is that the latter are \emph{Markovian} processes: The next state depends on the current state but not the past states. Thus, $\SOMplus$ models are \emph{history-dependent} but with very limited memory; only most recent public opinions are remembered. We will show that this minimal notion of memory has an impact on  consensus.

\emph{Contributions.}
We make the following  theoretical and experimental contributions:
\begin{enumerate}
    \item We generalize the DeGroot model to incorporate key aspects of the Spiral of Silence theory. To the best of our knowledge, this is the first extension of the DeGroot framework to do so.

    \item We show that in $\SOMminus$, convergence to consensus is guaranteed in clique graphs (i.e., fully connected graphs) with more than two agents. This result highlights that consensus remains possible under Spiral of Silence dynamics, even when silent agents are excluded from updates.

    \item We prove that, unlike in the classical DeGroot model, $\SOMminus$ does not guarantee consensus in all strongly connected aperiodic graphs.

    \item We demonstrate that $\SOMplus$ fails to guarantee consensus even in clique graphs. This negative result underscores that the limited memory in $\SOMplus$ can significantly alter the dynamics, making consensus harder to achieve under Spiral of Silence assumptions.

    \item We validate our models through examples and large-scale simulations involving over \emph{two million agents} on randomly generated networks with small-world properties typical of real social systems. These simulations support core claims of the Spiral of Silence theory, particularly the \emph{reinforcement of dominant views in social networks}. The simulation code is available at:  
\url{https://github.com/DavidGaona/belief_evolution_simulator.git}
\end{enumerate}


All in all, this paper highlights the impact of silence dynamics and memory on opinion formation and highlights the limitations of consensus in more nuanced  models. 

The paper is organized as follows: The new silence opinion models are introduced in Section~\ref{sec:model}. The study of consensus for these models is presented in Section~\ref{sec:convergence}. Our large-scale simulations and  case studies emerging from our models and highlighting the spiral of silence effect are presented in Section~\ref{sec:exp}. The concluding remarks are given in Section~\ref{sec:conclusion}. For the sake of space, all proofs have been moved to the Appendix.

    \section{Opinion Models}
    \label{sec:model}
    In the DeGroot framework \cite{degroot}, each agent updates their opinion by taking the weighted average of the opinions of those who influence them. The models of this framework, however, do not account for the social phenomenon known as the Spiral of Silence \cite{Noelle-Neumann1974}, where some agents may choose to become or remain silent if their opinion does not align with the majority. As a result, their \emph{current} opinion may not influence their contacts.

In this section, we generalize the DeGroot framework to take into account the Spiral of Silence. If an agent $j$ decides to be silent, there are at least two natural options when updating the opinions of the agents having $j$ as a contact: (1) agent $j$ is simply ignored in the update since their current opinion is unknown (or not public), or (2) the most recent opinion when $j$ was not silent is taken into account in the update. The former corresponds to a scenario where, for privacy purposes, opinions (messages) are removed once they have been accessed. The latter represents a typical scenario in social networks where previous opinions are kept and thus continue to influence others despite the agent's current silence.   

The above options lead us to the two generalizations of DeGroot models studied in this paper: the \emph{silence opinion memoryless}  ($\SOMminus$) models, where previous opinions are forgotten, and the \emph{silence opinion memory-based (or history-dependent)} ($\SOMplus$) models, where previous opinions are remembered. In both, agents become silent by a majority rule  for each case. Below we introduce the elements of the models.
  
\subsection{The Influence Graph}
\label{subsec:influenceGraph}

In social learning models, a \emph{community/society} is typically represented as a directed weighted graph with edges between individuals (agents) representing the direction and strength of the influence that one carries over the other. This graph is referred to as the \emph{Influence Graph}.

\begin{definition}[Influence Graph]\label{definition:agent_network}
    An ($n$-agent) \emph{influence graph} is a weighted directed graph $G = (A, E, I)$, where $A = \{1, \ldots, n\}$, $E \subseteq A \times A$, and $I:A \times A \rightarrow [0,1]$  a weight function  such that
    $I(i,j)=0 {\rm \ iff \ } (i,j) \not \in E$ and for each $i \in A$, $\sum_{j \in \N{i} \cup \{i\}} I(j,i) = 1$
where 
$ 
    \N{i} = \{ j \in A \setminus \{i\} : (j, i) \in E\}.
$
\end{definition}

The vertices in $A$ represent $n$ agents of a given community or network. The set of edges $E\subseteq A\times A$ represents the (direct) influence relation between these agents; i.e.,  $(i,j)\in E$ means that agent $i$ \emph{(directly) influences} agent $j$. The value $I(i,j)$, for simplicity written $I_{ij}$, denotes the strength of the influence: $0$ means no influence, and a higher value means stronger influence.  The normalization condition ensures that the total influence on each agent sums to 1. The set $\N{i}$ represents the \emph{neighbors} of agent $i$.

 We recall some notions from graph theory~\cite{diestel2017}. A  sequence in $E$ of the form $(i,i_1)(i_1,i_2)\ldots(i_{m-1},j)$ is a path (of length $m$) from $i$ to $j$.  The graph $G$ is said to be \emph{strongly connected} if for every pair $(i,j)$ of distinct nodes in $A$, there is a path from $i$ to $j$. A graph $G$ is a \emph{clique} if for every pair $(i,j)$ of distinct nodes in $A$,  $(i,j)\in E$.  A cycle is a path  $(i,i_1)(i_1,i_2)\ldots(i_{m-1},i)$ with all $i, i_1,\ldots i_{m-1}$ being distinct. Finally, $G$ is \emph{aperiodic} if the greatest common divisor of the lengths of its cycles is one.    

\subsection{Silence Opinion Models}
\label{subsec:SOM}

To incorporate the Spiral of Silence into the DeGroot framework, we will model the evolution of agents' opinions alongside their decisions to remain silent about a given underlying \emph{statement} or \emph{proposition}. Such a proposition could include controversial statements like, for example, \emph{``AI poses a threat to humanity''} or \emph{``pineapple belongs on pizza''}. Thus, the state of the agents (system) with respect to the proposition involves both the \emph{state of opinion} and the \emph{state of silence}.

The \textit{state of opinion} of all agents is represented as a vector in $[0,1]^{n}$. If $\Bv$ is a state of opinion, then $\Bi{i}$ denotes the opinion of agent $i$ with respect to a given proposition. If $\Bi{i}=0$ ($\Bi{i}=1$) agent $i$ completely  disagrees (agrees) with the proposition. The higher the value, the stronger the agreement.

The \textit{state of silence} is represented as a vector in $\{0,1\}^{n}$. If $\Sv$ is a state of silence, $\Si{i} = 1$ ($\Si{i} = 0$) means that agent $i$ is \emph{speaking} (is \emph{not} speaking; i.e, agent $i$ is silent).

At each time $t\in\Nat$, every agent $i \in A$ updates their opinion and their silence state. We shall use $\Bv^t$ and $\Sv^t$ to denote the state of opinion and silence at time $t$. We can now define a general Silence DeGroot opinion model as follows.  

\begin{definition}[SO Models]\label{definition:spiral_silence_opinion} 
    A \emph{Silence Opinion (SO) model}  is a tuple $(G, \BvInit, \SvInit, \mu_G)$ where $G = (A, E, I)$ is an $n$-agent influence graph, $\BvInit$ the initial state of opinion, $\SvInit$ the initial state of silence, $\mu_G: [0,1]^n \times \{0,1\}^n \times \Nat \rightarrow [0,1]^n \times \{0,1\}^n$ the state-transition function, called \emph{(state) update function}. For every $t \in \mathbb{N}$, the state of the system at time $t+1$ is $(\Bv^{t+1}, \St{}{t+1}) = \mu_G(\Bt{}{t}, \St{}{t},t).$
\end{definition}

 The update functions can be used to express any deterministic and discrete transition from one state to the next, possibly taking into account the influence graph, the current and even previous states. These functions are typically expressed by means of equations between states. In what follows, we will define particular update functions that take the spiral of silence into account.

\subsection{Spiral of Silence Models}
\label{subsec:SSM}

To build intuition we recall that the opinion update from the DeGroot model states that each agent adopts the weighted average of the opinions of the agents that directly influence them. This update can be equivalently expressed in terms of opinion differences with their neighbors, as shown in the rightmost formula of the following equation:
    
\begin{equation}
   \Bt{i}{t+1} = \sum_{j \in \N{i} \cup \{i\}} 
    I_{ji}\cdot \Bt{j}{t} \ = \  \Bt{i}{t} + \sum_{j \in \N{i}} I_{ji} \cdot  (\Bt{j}{t} - \Bt{i}{t}) \label{degroot-upd:eq}
\end{equation}
for each $i\in A$, $t \in \Nat$. Thus, in the DeGroot model each agent updates their opinion with the weighted average of the opinion differences  with their neighbors. 

We now generalize the above DeGroot update (Eq.~\ref{degroot-upd:eq}) as opinion update functions that depend not only on the current  state of opinion but also on the state of silence and, possibly, on previous states.  

\subsubsection{Memoryless Update} 
\label{subsubsec:MLU}
Our first update corresponds to the first option mentioned at the beginning of Sec.~\ref{sec:model}: The opinions of silent neighbors are ignored in the update. This can be realized by modifying Eq.~\ref{degroot-upd:eq}
as shown in the following opinion update equation: 

\begin{equation}\label{ML-bup:eq}
    \Bt{i}{t+1}  = \Bt{i}{t} + \sum_{j \in \N{i}} I_{ji} \cdot \St{j}{t} \cdot (\Bt{j}{t} - \Bt{i}{t})
\end{equation}


We now define the corresponding silence update function following the Spiral of Silence Theory. First, we need some notation. Let $x,y,\tau \in [0,1]$. The $\tau$-proximity relation $x \sim_\tau y$ holds true iff $|x-y|\leq \tau$, i.e., if $x$ and $y$ are within a tolerance radius $\Toli.$ Also, let $\N{i}^t = \{j \in \N{i} : \Si{j}^t = 1\}$ be the sets of non-silent neighbors of $i$ at time $t.$ The silence update function is given as follows:

\begin{equation}\label{ML-sup:eq}
    \St{i}{t+1}  =
    \begin{cases}
        1 & \text{if } \Maj_i  \cdot |\N{i}^t| \leq |\{ j \in \N{i}^t\ | \ \Bt{i}{t} \sim_{\Toli{i}} \Bt{j}{t} \}| \\
    0 & \text{otherwise }
    \end{cases}
\end{equation}  
For each agent~$i$, the constants $\Toli{i}, \Maj_i \in [0,1]$ represent the agent’s \emph{tolerance radius} and \emph{majority threshold} i.e., the minimum proportion of neighbors required for a group to be regarded as a majority by agent~$i$.\footnote{Although the constraint $\Maj_i \geq 0.5$ may seem natural, we neither require nor assume it in our technical results or simulations.}

Intuitively, an agent $i$ considers the opinion of $j$ to be close enough to theirs if it is within their tolerance radius $\tau_i$. Agent $i$ decides to speak iff at least a fraction $\Maj_i$, their majority threshold,  of their \emph{non-silent} contacts have opinions close enough to theirs.

We can now define the memoryless models for the spiral of silence.
\begin{definition}[$\SOMminus$] Let  $M=(G, \BvInit, \SvInit, \mu_G)$ be an SO model with $G=(A,E,I)$. Then $M$ is said to be a \emph{SO Memoryless} ($\SOMminus$) model  if for each $i\in A$ and $t\in \Nat$, 
$\mu_G(\Bt{}{t}, \St{}{t},t)=(\Bv^{t+1}, \St{}{t+1})$ where $\Bt{i}{t+1}$ and $\St{i}{t+1}$  are determined by Eq.~\ref{ML-bup:eq} and   Eq.~\ref{ML-sup:eq}, respectively. 
\label{definition:SOM-}
\end{definition}
Clearly, we can recover the DeGroot update (Eq.~\ref{degroot-upd:eq}) by setting each
tolerance radius constant $\Toli{i}$ in Eq.~\ref{ML-sup:eq} to 1 (or by setting each majority threshold $\Maj_i$ to 0) and the initial state of silence $\SvInit$ to the unit vector $\mathbf{1}_n=(1,1,\dots,1)$ of size $n$.

\begin{remark} The dynamic nature of the influence graph in $\SOMminus$ models sets them apart from the static influence in the DeGroot model. Silencing 
 an agent $j$ at a given time amounts to removing all edges $(j,i) \in E$ from the graph at that moment. This allows for more complex opinion formation behaviors.

Furthermore, we could have normalized the sum in Eq.~\ref{ML-bup:eq}
 by dividing it by $\sum_{j \in \N{i}^t} I_{ji}$ when this divisor is not equal to zero. \footnote{Notice we are using $\N{i}^t$ as index set in the summation rather than $\N{i}$.} While this would not impact our technical results in Section~\ref{sec:convergence}, it would amplify the influence of the non-silent neighbors of $i$ at time $t$, which may seem unnatural. Instead, notice that from Eq.~\ref{ML-bup:eq} we get:

\begin{equation}\label{MB-bup:eq_2}
 \Bt{i}{t+1}  = \Bt{i}{t} + \sum_{j \in \N{i}} I_{ji} \cdot \St{j}{t} \cdot (\Bt{j}{t} - \Bt{i}{t}) = (1- \sum_{j \in \N{i}} I_{ji}\cdot\St{j}{t})\cdot \Bt{i}{t}+ \sum_{j \in \N{i}} I_{ji} \cdot \St{j}{t} \cdot \Bt{j}{t}. 
\end{equation}
Thus, the influence $I_{ji}$ of a silent agent $j$ at time $t$ may be seen as increasing the weight of agent $i$'s opinion at that time. This can be interpreted as agent $i$ increasing confidence in their own opinion in the absence of external influence from agent $j$.

\end{remark}

\subsubsection{Memory-based Update.} 
\label{subsubsec:MU}
We now introduce the models corresponding to second option in the beginning of  Sec.~\ref{sec:model}: If   $j$ is silent at time $t$, the opinion update takes into account the opinion they had the last time unit   $u$ (where $u \leq t$) when they were not silent. For this to be well-defined, we assume that initially all agents are not silent; i.e., $\SvInit=\mathbf{1}_n$. 

Let $\Prime{t}_j = \max \{ u \leq t \mid \St{j}{u} = 1 \}$. The  \emph{public state of opinion} at time $t$ is a state of opinion $\PBv{t}$ such that $\PBv{t}_j =\Bt{j}{\Prime{t}_j}$ for each $j\in A.$  The following opinion  update equation captures the above intuition: 
\begin{equation}\label{MB-bup:eq}
   \Bt{i}{t+1}  = \Bt{i}{t} + \sum_{j \in \N{i}} I_{ji} \cdot (\PBv{t}_j - \Bt{i}{t})
\end{equation}

The corresponding silence update tells us that an agent $i$ becomes or remains silent at time $t+1$ precisely when the public opinion of the majority of \emph{all} their neighbors are not close enough to their own. More precisely: 
\begin{equation}\label{MB-sup:eq}
    \St{i}{t+1}  =
    \begin{cases}
        1 & \text{if } \Maj_i  \cdot |\N{i}| \leq |\{ \ j \in \N{i}\ \ | \ \  \Bt{i}{t} \sim_{\Toli{i}} \PBv{t}_j \  \}| \\
    0 & \text{otherwise }
    \end{cases}
\end{equation}  
where $\tau_i,\Maj_i  \in[0,1]$ are the \emph{tolerance radius} and the \emph{majority threshold} constants, respectively, for agent $i$.  The memory-based models are defined thus:
\begin{definition}[$\SOMplus$] Let  $M=(G, \BvInit, \SvInit, \mu_G)$ be an SO model where $G=(A,E,I)$ is an $n$-agent influence graph. Then $M$ is said to be an \emph{SO memory-based} ($\SOMplus$) model  if $\SvInit=\mathbf{1}_n$ and for each $i\in A$ and $t\in \Nat$, 
$\mu_G(\Bt{}{t}, \St{}{t},t)=(\Bv^{t+1}, \St{}{t+1})$ where $\Bt{i}{t+1}$ and $\St{i}{t+1}$  are determined by Eq.~\ref{MB-bup:eq} and   Eq.~\ref{MB-sup:eq}, resp. 
\label{definition:SOM+}
\end{definition}

The DeGroot update (Eq.~\ref{degroot-upd:eq}) is a particular case of the $\SOMplus$ opinion update (Eq.~\ref{MB-bup:eq}): We only need 
to set each
tolerance radius constant $\tau_i$ in Eq.~\ref{MB-sup:eq} to 1 since  $\SvInit$ is already required to be the unit vector of ones $\mathbf{1}_n$ in $\SOMplus$ models.

\begin{remark} The main difference between $\SOMplus$ and the DeGroot and $\SOMminus$ models is that the latter two are \emph{Markovian} processes. The next state depends on the current state but not past states. In fact, much of the tractability of  DeGroot models derives from its connection to Markov chains. Nevertheless, the next state in $\SOMplus$ models does not depend on the entire state history but just on the most recent public opinions. In the next sections, we will see the impact of this  limited amount of memory on opinion evolution.\end{remark}

    \section{Results on Consensus}
    \label{sec:convergence}
    \emph{Consensus} is a central problem in social learning models. Often, an inability to reach a consensus is a sign of polarization. In the DeGroot framework, consensus represents convergence to the same opinion value over time.  

\begin{definition}[Consensus] Let $(G, \BvInit, \SvInit,    \mu_G)$ be an \emph{SO} model with $G=(A,E,I)$. We say that the agents in $A$ converge to consensus if there exists a value $v \in [0,1]$ such that for all $ i \in A$, $\lim_{t \to \infty}\Bt{i}{t} = v.$ 
\end{definition}

Conversely, we refer to the lack of (convergence to) consensus as \emph{dissensus}, which occurs when agents fail to converge to a single opinion value.

In this section, we explore consensus for both types of  models on two different graph topologies: \emph{clique} and \emph{strongly connected} graphs. We show that consensus can only be guaranteed in arbitrary $\SOMminus$ models on clique graphs. In all other cases, consensus cannot be guaranteed due to the existence of \emph{perpetual silence} in the case of $\SOMminus$ models and public opinions in the case for $\SOMplus$ models. 

Due to space limitations, in this section we provide only sketches of the proofs for the  lemmas and theorems; the complete proofs can be found in the Appendix.   

\subsection{\texorpdfstring{$\SOMminus$ Properties}{SOMminus Properties}}

\label{subsec:SOM-Prop}
 A key property of  $\SOMminus$ models is that if all agents become silent at time $t$, they will all speak up at the very next round $t+1$.  The following lemma formalizes this property:

\begin{restatable}{lemma}{lemmaA}
    \label{lemma:limited-silent-states}
    Let $(G, \BvInit, \SvInit, \mu_G)$ be an $\SOMminus$ model with $G=(A,E,I)$. $\text{For any} \ t \in \mathbb{N}, \ \text{if } \ \St{i}{t} = 0 \ \text{for all}\ i \in A \ \text{then } \ \text{for all} \ i \in A, \  \St{i}{t+1} = 1 $.
\end{restatable}


From Lem.~\ref{lemma:limited-silent-states}, in  $\SOMminus$ models we cannot have all agents silent forever. 

\begin{corollary}\label{no-silent-forever:cor}
    Let $(G, \BvInit, \SvInit,$ $ \mu_G)$ be an $\SOMminus$ model with $G=(A,E,I)$. For every 
    $t \in \mathbb{N}$,$ \ \text{there exist}   \ i \in A$  such that $\St{i}{t} = 1$ or $\St{i}{t+1} = 1$.
    \label{corollary:no-memoryless-model-silent-forever}
\end{corollary}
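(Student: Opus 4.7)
The plan is to obtain the corollary by a straightforward case split on the silence profile at time $t$, invoking Lemma~\ref{lemma:limited-silent-states} in exactly one of the two cases. The statement to establish is that, for every $t \in \Nat$, at least one agent is non-silent either at time $t$ or at time $t+1$. Equivalently, the negation to rule out is that \emph{all} agents are silent at both times $t$ and $t+1$ simultaneously.

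First I would consider the case in which there exists some $i \in A$ with $\St{i}{t} = 1$. In this case the conclusion holds trivially by picking this $i$ as a witness, with no appeal to the lemma needed. Second, I would consider the complementary case in which $\St{i}{t} = 0$ for every $i \in A$. Here Lemma~\ref{lemma:limited-silent-states} applies directly and yields $\St{i}{t+1} = 1$ for every $i \in A$; in particular, any $i \in A$ (which is nonempty since $A = \{1,\ldots,n\}$) serves as a witness with $\St{i}{t+1} = 1$. Combining the two cases delivers the disjunction in the statement.

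There is essentially no technical obstacle: the corollary is a one-step logical consequence of the lemma, and the only subtlety worth flagging is the implicit assumption that $A \neq \emptyset$, which is guaranteed by Definition~\ref{definition:agent_network}. I would keep the proof to three or four lines, since any longer argument would obscure how directly it follows from Lemma~\ref{lemma:limited-silent-states}.
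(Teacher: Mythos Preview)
Your proposal is correct and matches the paper's approach: the paper presents the corollary as an immediate consequence of Lemma~\ref{lemma:limited-silent-states} without spelling out the case split, and your explicit two-case argument is precisely the standard way to unpack that implication.
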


We will now prove that the sequences of maximum and minimum opinion values, $\{\max (\Bt{}{t})\}_{t\in \mathbb{N}} $ and $\{\min (\Bt{}{t})\}_{t\in \mathbb{N}} $, are  (bounded) monotonically non-increasing and non-decreasing, respectively, so they must converge to some opinion value, say $U$ and $L$ with $L\leq U$. In Section \ref{subsec:SOM-Consensus}, we will prove that, under certain conditions, $U = L$, which implies that the model converges to consensus.

First, we show that the opinion values in a state are bounded by the extreme opinions in the previous state.

\begin{restatable}[Opinion Bounds]{lemma}{lemmaB}
    \label{lemma:opinion-bound-previous-extremes} 
    Let $(G, \BvInit, \SvInit, \mu_G)$ be an  $\SOMminus$ model with $G=(A,E,I)$. 
    $\text{For any} \ t \in \mathbb{N}$,
    $ min(\Bt{}{t}) \leq \Bt{i}{t+1} \leq max(\Bt{}{t})$ for all $i \in  A$.
\end{restatable}












Notice that monotonicity does not necessarily hold for the opinion values of agents. Nevertheless, it follows from 
Lem.~\ref{lemma:opinion-bound-previous-extremes} that $max(\Bt{}{t})$ is monotonically non-increasing and $min(\Bt{}{t})$ is monotonically non-decreasing with respect to $t$.

\begin{corollary}[Monotonicity of Extremes]
    Let $(G, \BvInit, \SvInit, \mu_G)$ be an $\SOMminus$ model with $G=(A,E,I)$. For all $t \in \mathbb{N},$ 
    $\ max(\Bt{}{t+1})$ $\leq max(\Bt{}{t})$ and $ \ min(\Bt{}{t+1}) \geq min(\Bt{}{t})$  .
    \label{corollary:monotonicity-extremes}
\end{corollary}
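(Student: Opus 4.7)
The plan is to derive the corollary as an almost immediate aggregation of Lem.~\ref{lemma:opinion-bound-previous-extremes}. Fix an arbitrary $t \in \Nat$. The bounds lemma gives, for every $i \in A$, the pointwise inequality $\min(\Bt{}{t}) \leq \Bt{i}{t+1} \leq \max(\Bt{}{t})$. The key observation is that both bounding values on the right and left are \emph{constants independent of $i$}, so every coordinate of the vector $\Bt{}{t+1}$ is sandwiched between the same two numbers.

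To obtain the first claim, I would take the maximum over $i \in A$ on both sides of $\Bt{i}{t+1} \leq \max(\Bt{}{t})$. Since by definition $\max_{i \in A} \Bt{i}{t+1} = \max(\Bt{}{t+1})$ and the right-hand side does not depend on $i$, this immediately yields $\max(\Bt{}{t+1}) \leq \max(\Bt{}{t})$. For the second claim I would apply the symmetric argument, taking the minimum over $i \in A$ of both sides of $\min(\Bt{}{t}) \leq \Bt{i}{t+1}$ to conclude $\min(\Bt{}{t}) \leq \min(\Bt{}{t+1})$. Since $t$ was arbitrary, monotonicity follows.

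There is no real obstacle here: the corollary is a one-line consequence of the pointwise bounds just proved. The only subtlety worth flagging explicitly in the write-up is the one the authors already hint at in the paragraph preceding the statement, namely that the monotonicity we are proving is for the \emph{extrema} of the opinion vector, not for individual agents' opinion trajectories (which may oscillate). Keeping the quantifier structure clean and writing $\max(\Bt{}{t+1})$ as $\max_{i \in A} \Bt{i}{t+1}$ before invoking the lemma makes this distinction transparent.
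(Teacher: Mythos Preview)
Your proposal is correct and matches the paper's approach: the paper presents this corollary without an explicit proof, noting only that it ``follows from Lem.~\ref{lemma:opinion-bound-previous-extremes},'' and your argument spells out precisely that derivation by taking extrema over $i \in A$ of the pointwise bounds.
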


The monotonicity (and boundedness) of extremes and the Monotonic Convergence Theorem \cite{RealAnalysis}, lead us to the existence of limits for the opinion values of extreme agents.

\begin{theorem}[Limits of Extremes] Let $(G, \BvInit, \SvInit, \mu_G)$ be an $\SOMminus$ model with $G=(A,E,I)$. There must exist $\ U,\,L \in [0,1]$ such that $\ \lim_{t \to \infty} \{max(\Bt{}{t})\} = U$ and $\ \lim_{t \to \infty} \{min(\Bt{}{t})\}$ $ = L$.
    \label{theo:extremeLimits}
\end{theorem}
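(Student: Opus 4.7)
The plan is to obtain the two limits by a direct appeal to the Monotone Convergence Theorem for real sequences, using what has already been established in Corollary~\ref{corollary:monotonicity-extremes} together with the fact that all opinion values live in the bounded interval $[0,1]$.

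First, I would fix the two sequences of interest, namely $\{\max(\Bt{}{t})\}_{t\in\mathbb{N}}$ and $\{\min(\Bt{}{t})\}_{t\in\mathbb{N}}$. Since every $\Bt{i}{t}\in[0,1]$ by definition of an opinion state, both sequences are bounded: $0\leq \min(\Bt{}{t})\leq \max(\Bt{}{t})\leq 1$ for every $t\in\mathbb{N}$. This gives the boundedness hypothesis required by the Monotone Convergence Theorem.

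Next, I would invoke Corollary~\ref{corollary:monotonicity-extremes}, which states that $\max(\Bt{}{t+1})\leq \max(\Bt{}{t})$ and $\min(\Bt{}{t+1})\geq \min(\Bt{}{t})$ for all $t$. Thus $\{\max(\Bt{}{t})\}_{t\in\mathbb{N}}$ is monotonically non-increasing and $\{\min(\Bt{}{t})\}_{t\in\mathbb{N}}$ is monotonically non-decreasing. Combining monotonicity with boundedness, the Monotone Convergence Theorem yields the existence of $U=\lim_{t\to\infty}\max(\Bt{}{t})$ and $L=\lim_{t\to\infty}\min(\Bt{}{t})$, both lying in $[0,1]$ since the interval is closed. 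That concludes the argument.

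There is essentially no hard step here: the statement is really a packaging of boundedness (built into the definition of the opinion state space) and of the monotonicity of the extreme values (already obtained as a corollary of Lem.~\ref{lemma:opinion-bound-previous-extremes}). The only point worth being careful about is to make sure the conclusion is stated with $U,L\in[0,1]$ rather than merely in $\mathbb{R}$, which follows because $[0,1]$ is closed under limits; this is the one place a reader might otherwise expect an extra line of justification.
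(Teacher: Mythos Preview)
Your proposal is correct and matches the paper's own justification exactly: the paper simply notes that monotonicity (Corollary~\ref{corollary:monotonicity-extremes}) and boundedness of the extremes, together with the Monotone Convergence Theorem, yield the existence of $U$ and $L$. You have supplied precisely this argument, with the added care of noting that $U,L\in[0,1]$ because the interval is closed.
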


Henceforth, $U$ and $L$ refer to the limit values from Th. \ref{theo:extremeLimits}. Notice that if the limits for the opinion values of extreme agents are the same (i.e., $U=L$) and by the squeeze theorem \cite{RealAnalysis}, we can conclude that all the agents converge to consensus.

\subsection{\texorpdfstring{Consensus in $\SOMminus$ \ Cliques}{Consensus in SOMminus Cliques}}
\label{subsec:SOM-Consensus}

In this section, we show that consensus is guaranteed for $\SOMminus$ models whose influence graphs are cliques with at least three agents.

We consider the \emph{minimum influence} of the underlying (clique) graph $G = (A, E, I)$, defined as the constant $I_{min}=$ $\min_{(i,j) \in E} $ $I(i,j)$ \footnote{Notice that $I_{min} > 0$ since we are assuming that the underlying graph is a clique.} and  the difference between the maximum and minimum opinion values at time $t$, defined as 
$R_{t}=max(\Bt{}{t}) - \min(\Bt{}{t})$, notice that $R_{t} \in [0,1]$ at any time $t$. 

The proof strategy is based on the following key observations:
\begin{enumerate*}[label=(\emph{\alph*})]
    \item From any time $t$ onwards, there will always be non-silent agents (Cor.~\ref{no-silent-forever:cor}).
    \item The maximum and minimum opinion values, which are monotonically non-increasing and non-decreasing respectively,  must converge to some value, say $U$ and  $L$ respectively, with $L \leq U$ (Th.~\ref{theo:extremeLimits}).

     \item As the graph is a clique, the non-silent agents will influence, through the update function, all other graph agents infinitely often (Cor.~\ref{no-silent-forever:cor}).  In each such update, the span between the maximum and minimum opinions is multiplied by at most the constant $1-I_{min}<1$ (Lem.~\ref{lemma:mepsilon}). Since this contraction recurs infinitely often, the span tends to zero, which forces all opinions to converge to a common value (Th.~\ref{theorem:consensus}).
     
\end{enumerate*}

To prove consensus (i.e., $U = L$), we now demonstrate how the difference between the maximum and minimum opinion values is reduced over different time units.


\begin{restatable}[Convergent Extremes Differences]{lemma}{lemmaD} 
    \label{lemma:mepsilon}
    Let $(G, \BvInit, \SvInit, \mu_G)$ be an $\SOMminus$ model with an $n$-agent influence graph $G=(A,E,I)$ where $G$ is a clique with $ n \geq 3$. For all $m \in \mathbb{N}$, there exists  $t \in \mathbb{N}$ such that $R_{t} \leq R_{0} \cdot (1- I_{min})^m $.
\end{restatable}

As the difference between the maximum and minimum opinions reduces over time, approaching zero, we can now state our consensus result for cliques with at least three agents. 

\begin{restatable}[Consensus in $\SOMminus$ Cliques]{theorem}{theoremB}
    \label{theorem:consensus}
    Let $(G, \BvInit, \SvInit, \mu_G)$ be an  $\SOMminus$ model with an $n$-agent influence graph $G=(A,E,I)$ where $G$ is a clique. If $ n \geq 3$, then the agents in $A$ converge to consensus.
\end{restatable}




\begin{remark}    
     Notice that for $2$-agent cliques ($n=2$) consensus is not guaranteed; let us consider a clique with two agents: Agent $1$ and Agent $2$, where the opinions are $\Bt{1}{0} = 1$ and $\Bt{2}{0} = 0$, the influences are  $I(1,2)=I(2,1)=1$, the tolerance radii are $\Toli{1} = \Toli{2} = 1$ and the majority thresholds are $\Maj_1 = \Maj_2 = 0.5$. In this case, the opinion evolution of agent $1$, starting with opinion $1$, and agent $2$, starting with opinion $0$, always alternates between the values $1$ and $0$.  Figure \ref{fig:ml_opinions-clique-2-nodes} illustrates the opinion evolution of this clique. 
     \label{Remark:two-agent-clique}
\end{remark}

\begin{figure}[ht]
    \centering
    \begin{subfigure}[]{0.22\textwidth}
        \centering
        \includegraphics[width=\textwidth,scale=0.8,page=1]{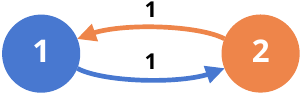}
        \caption{Influence graph}
        \label{fig:ml_opinions-clique-2-nodes-graph}
    \end{subfigure}
    \hspace{0.01\textwidth}
    \begin{subfigure}[]{0.5\textwidth}
        \centering
        \includegraphics[width=\textwidth,scale=0.8,page=1]{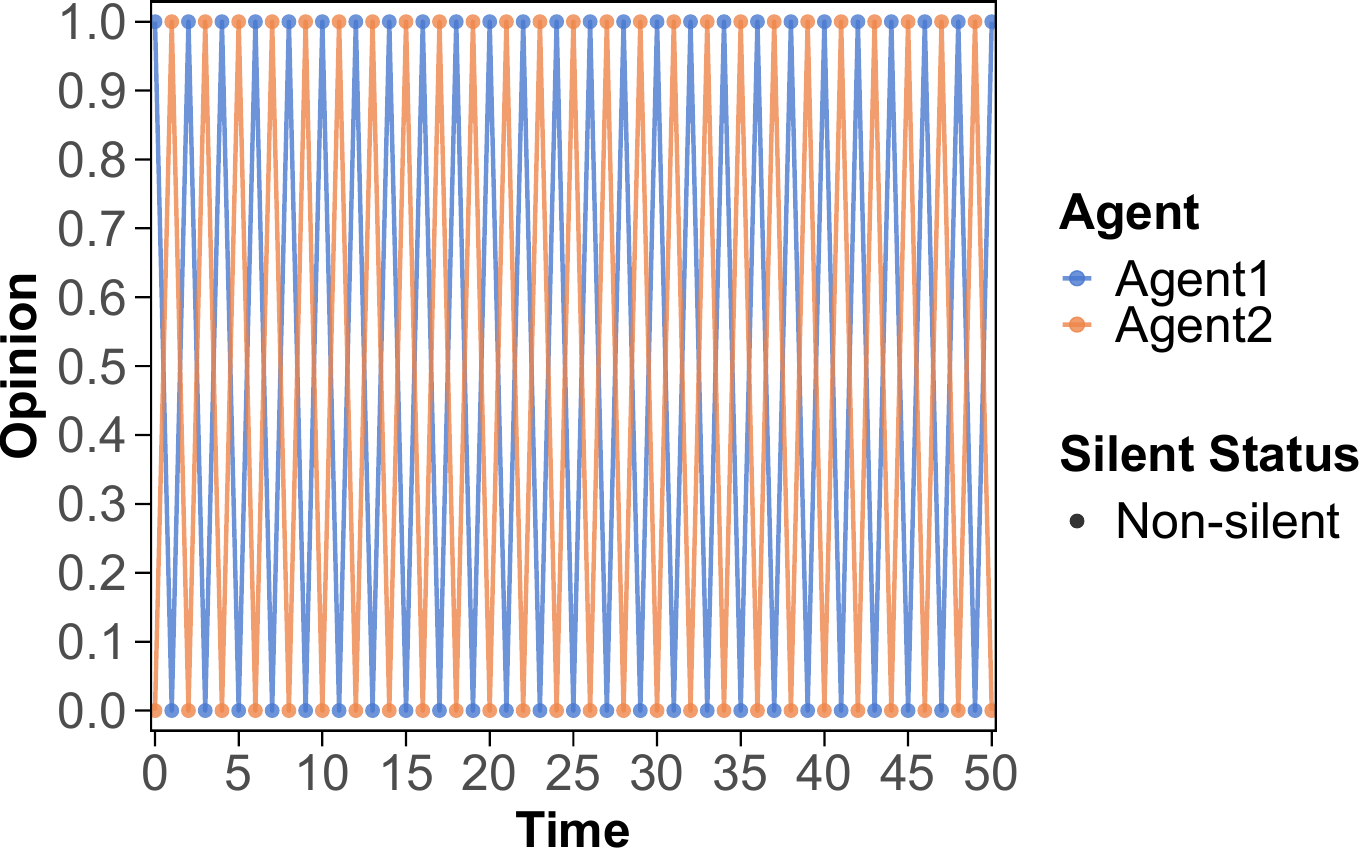}
        \caption{State evolution}
        \label{fig:ml_opinions-clique-2-nodes-state}
    \end{subfigure}
    \caption{2-agent clique opinion evolution;  $\BvInit=(1,0)$, $\tau_i=1$, $\Maj_i=0.5, i \in \{1,2\}$.}
    \label{fig:ml_opinions-clique-2-nodes}
\end{figure}

\subsection{\texorpdfstring{Dissensus in $\SOMminus$ Models}{Dissensus in SOMminus Models}}
\label{subsec:SOM-Dissensus}

In strongly connected aperiodic graphs, the $\SOMminus$ models differs from  DeGroot models by no longer guaranteeing consensus. Agents can enter a state of perpetual silence, effectively disrupting opinion propagation as if severing connections from the graph. This phenomenon is particularly critical when silent agents form bridges between connected components. Their opinions, influenced by opposing connected components, may remain below their majo threshold indefinitely. As a result, they prevent opinion exchange between components, obstructing the possibility of achieving consensus. The following example illustrates this scenario  by showing the agents state evolution and influence graph.

\begin{remark}
    For visual clarity, self-influences are omitted. As a result, the visible incoming influences for each agent may not sum to one. Nevertheless, self-influences are implicitly present to ensure conformity with Definition \ref{definition:agent_network}. Readers can infer an agent's self-influence by subtracting the sum of its visible incoming influences from one.
\end{remark}

\begin{figure}[hbtp]
    \centering
    \hspace{0.025\textwidth}
    \begin{subfigure}[t]{0.325\textwidth}
        \centering
        \includegraphics[width=\textwidth,page=1]{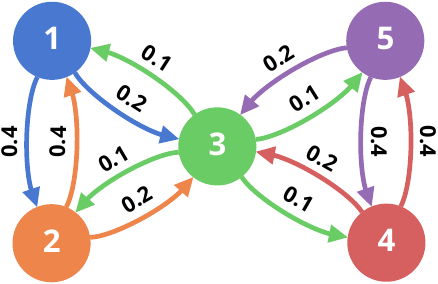}
        \caption{Strongly connected aperiodic influence graph (self influence excluded)}
        \label{fig:ml_dissensus_graph}
    \end{subfigure}
    \hspace{0.05\textwidth}
    \begin{subfigure}[t]{0.5\textwidth}
        \centering
        \includegraphics[width=\textwidth]{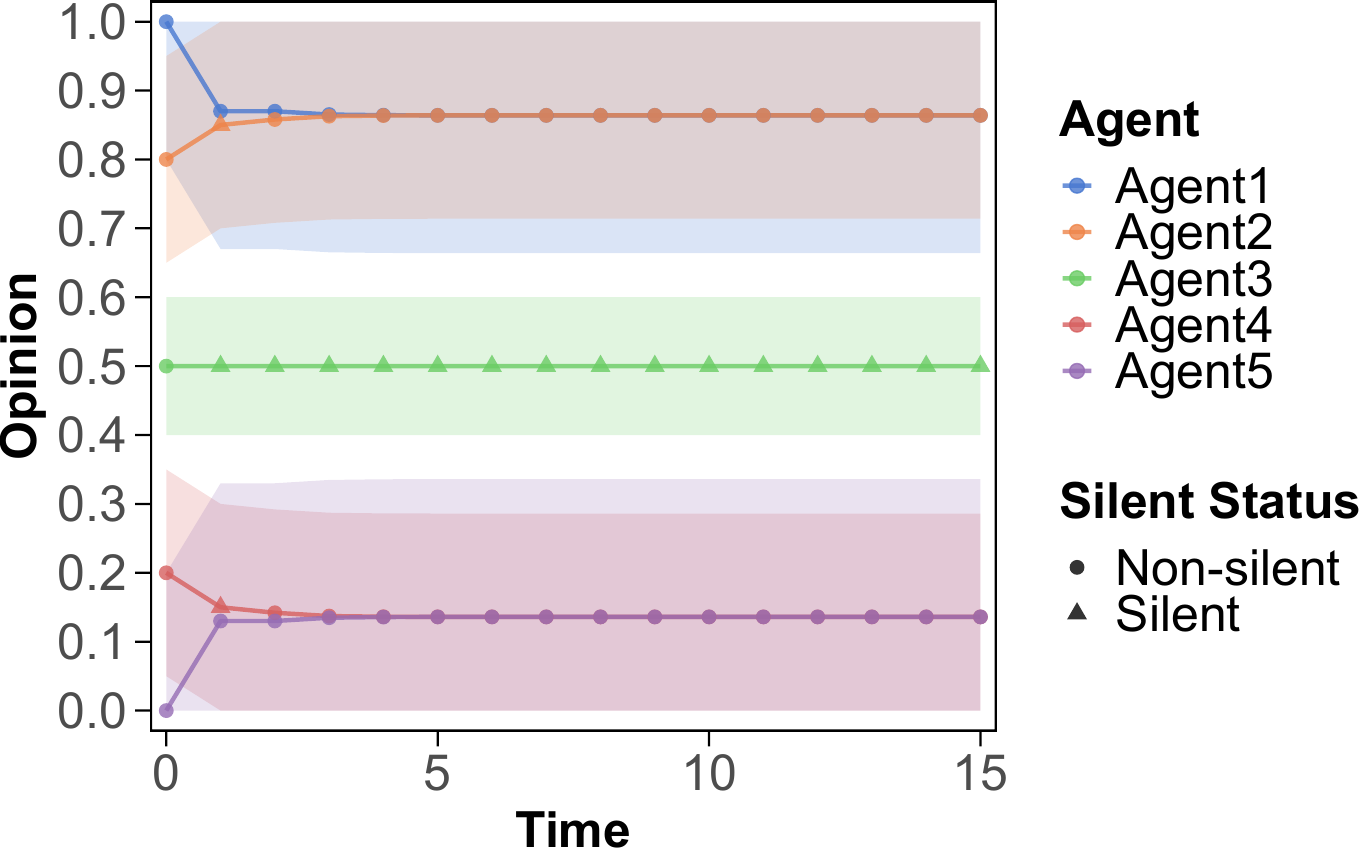}
        \caption{Each plot shows agents' state evolution over time.}
        \label{fig:ml_dissensus_chart}
     \end{subfigure}
    \hspace{0.025\textwidth}
    \caption{Examples of Dissensus in $\SOMminus$ Models. Triangles represent silent agents, circles non-silent ones. Colored areas indicate opinion values within each agent's tolerance radius. Initial state vector: $\BvInit=(1.0, 0.8, 0.5, 0.2, 0.0)$; tolerance radii $\Tol=(0.2, 0.15, 0.1, 0.15, 0.2)$; majority thresholds equal to $0.5$ for each agent.}
    \label{fig:ml_dissensus}
\end{figure}

In the example shown in Fig.~\ref{fig:ml_dissensus}, Agent 3 serves as the sole bridge connecting two distinct graph components: \{Agents 1, 2\} and \{Agents 4, 5\}. Due to the tolerance radius and majority threshold parameters, Agent 3 becomes effectively isolated due to its central position. Specifically, Agent 3's belief ($0.5$) lies outside the tolerance radius of agents in both components, causing it to have no neighbors with sufficiently similar opinions. Consequently, Agent 3 becomes perpetually silent after the first round.

Meanwhile, the opposing influences from both components on Agent 3's belief update equation cancel each other out the ``pull'' from the left component (Agents 1, 2) is balanced by an equal and opposite ``pull'' from the right component (Agents 4, 5). This results in Agent 3's belief remaining fixed while it stays silent. In contrast, agents within each component maintain beliefs within each other's tolerance radii (after the first round), ensuring they continue to speak and influence one another. As a result, each component converges to its own local consensus, while the perpetually silent Agent 3 can no longer transmit influence between them. The graph thus becomes effectively disconnected, preventing global consensus despite the original strong connectivity.

\subsection{\texorpdfstring{Dissensus in $\SOMplus$ Models}{Example of Dissensus in SOMplus Models}}
\label{subsec:SOM+Dissensus}

We now prove through an insightful counterexample that, unlike in the \emph{memoryless}   $\SOMminus$ models, consensus is not guaranteed in $\SOMplus$ models for clique graphs. This negative result shows that incorporating a minimal yet natural memory mechanism (where only the most recent public opinions are considered) can actually hinder  consensus.

While $\SOMplus$ models share similarities with $\SOMminus$ ones regarding perpetual silence, it introduces a unique phenomenon where the entire graph can enter and indefinitely remain in a silent state. 
This distinction stems from how silent agents influence the opinion update in $\SOMplus$. 

Recall from Section \ref{subsubsec:MU} that $\PBv{t}$ is the public state of opinion at time $t$ and represents the most recent public opinion of each agent. Henceforth, we will refer to $\PBv{t}_i$ as the \emph{public opinion of agent $i$}. In $\SOMplus$, speaking agents influence the opinions of their neighbors as in $\SOMminus$. Nevertheless, unlike in $\SOMminus$,  silent agents influence  their neighbors with their public opinion. 


Hence, if agent $i$'s opinion converges to a value where most recent public opinions of more than a fraction $\Maj_i$ of its neighbors fall outside its tolerance range, the agent will remain perpetually silent. This leads to scenarios where agents withdraw from discourse, while other agents continue to be influenced by these outdated public opinions.

Unlike $\SOMminus$, where opinions of silent agents are disregarded, $\SOMplus$ allows for the persistence of unchanged public opinions indefinitely. This can result in dissensus (i.e., lack of convergence to a consensus) due to the formation of public opinions that no longer reflect the current opinions of silent agents.
\subsubsection{Counter-example to Consensus.} 
Consider the $\SOMplus$ model $M=(G, \BvInit, \SvInit, \mu_G)$ with $G$ as the Clique in Fig.\ref{fig:mutiple_value_convergence_graph} and 
$\BvInit=(1.0, 0.9, 0.1, 0.0)$, tolerance radius $\Tol_i = 0.1$, majority threshold $\Maj_i=0.5$ for each agent $i$ in $G$. All agents initially have more than half of their neighbors' opinions outside their tolerance radius. Consequently, all agents become silent at $t=1$. At this point, each agent's updated opinion places all other agents' opinions outside its tolerance radius. The agents' opinions then converge to distinct values, determined by the initial opinions $\BvInit$, which perpetuates the condition for silence. This state persists indefinitely, as the convergence values maintain the silence condition for all agents. This results in dissensus as illustrated in Fig.\ref{fig:mutiple_value_convergence_chart} . 

The above counter-example demonstrates how the $\SOMplus$ models can lead to complete silence and opinion divergence in a clique, preventing the possibility of consensus.

\begin{figure}[htbp]
    \centering
    \hspace{0.05\textwidth}
    \begin{subfigure}[t]{0.3\textwidth}
        \centering
        \includegraphics[width=\textwidth]{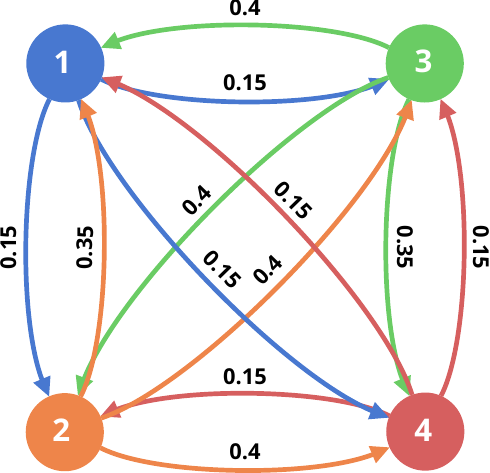}
        \caption{Clique influence graph (self influence excluded)}
        \label{fig:mutiple_value_convergence_graph}
    \end{subfigure}
    \hspace{0.075\textwidth}
    \begin{subfigure}[t]{0.5\textwidth}
        \centering
        \includegraphics[width=\textwidth]{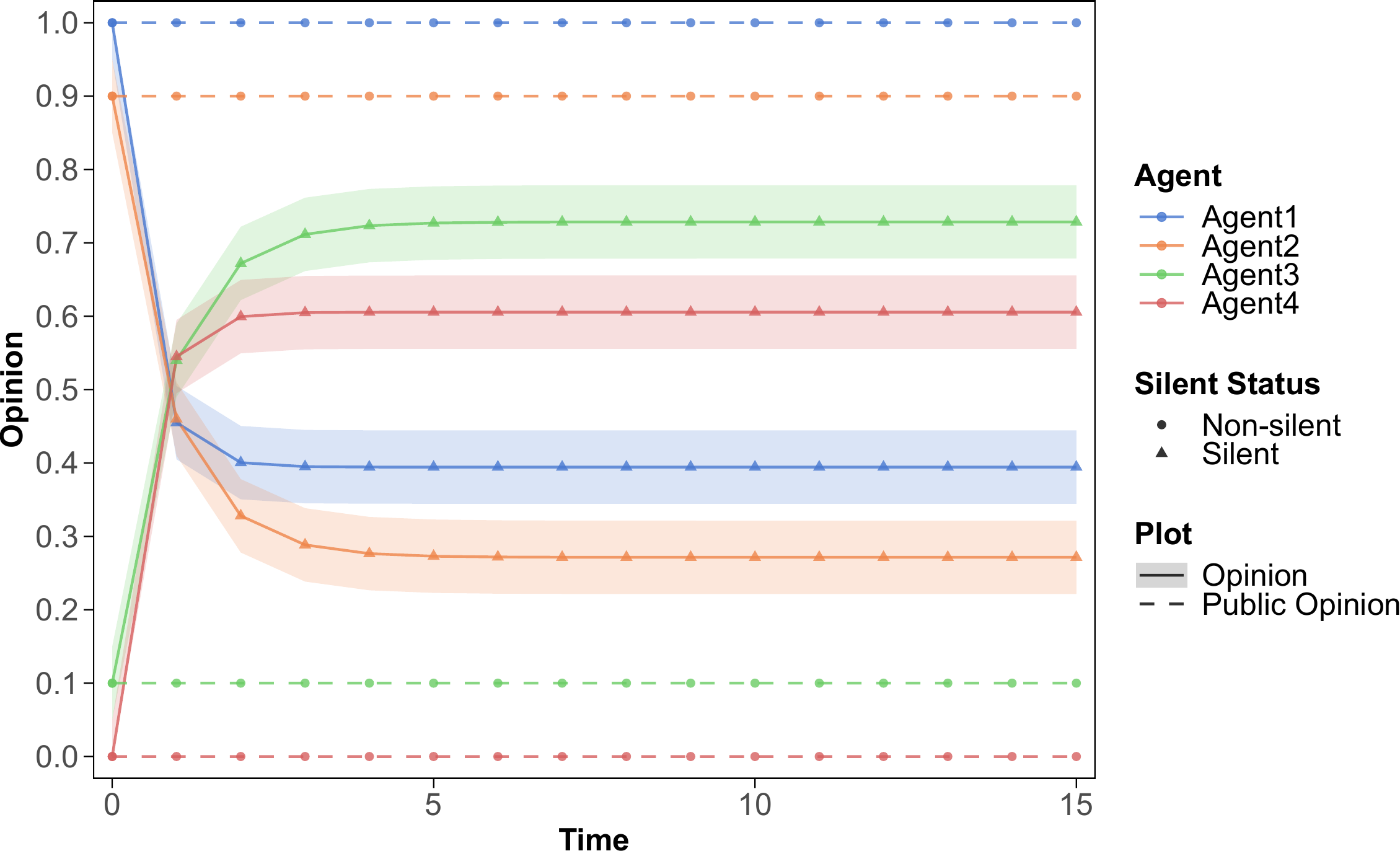}
        \caption{Each plot shows agents' state evolution over time.}
        \label{fig:mutiple_value_convergence_chart}
    \end{subfigure}
    \hspace{0.075\textwidth}
    \caption{Counter-Example to Consensus in an $\SOMplus$ Model. Triangles represent silent agents, circles non-silent ones. Colored areas indicate opinion values within each agent's tolerance radius. Initial state vector: $\BvInit=(1.0, 0.9, 0.1, 0.0)$, tolerance radius and majority threshold equal to $0.1$ and $0.5$ for each agent.}
    \label{fig:mutiple_value_convergence}
\end{figure}

\section{Experimenting with the Spiral of Silence}
\label{sec:exp}

This section demonstrates how  $\SOMminus$ and  $\SOMplus$ models can capture key dynamics of the spiral of silence through agent-based simulations. We first showcase an $\SOMminus$ model that reinforces dominant opinions, enabling vocal minorities to disproportionately influence public discourse despite numerical inferiority. Next, we illustrate an $\SOMplus$ model revealing an apparent dichotomy: agents privately converge to \emph{consensus} while maintaining \emph{divergent} public stances, mirroring social  environments where outdated expressions mask underlying agreement \cite{Hampton2014}. 

To validate scalability, we simulate networks ranging from 4 to over 2.1 million agents ($2^2$ to $2^{21}$) with varying connectivity, leveraging a modified version of Ross et al.'s algorithm \cite{Ross2019} to generate networks with power-law ("rich-get-richer") degree distributions and small-world properties. Small-world properties are characterized by short average path lengths between nodes while maintaining high local clustering. This reflects \emph{real social networks}' structure where information spreads efficiently despite limited local connections, making our simulations more realistic for studying opinion dynamics. Indeed, studies have shown social networks exhibit 6 degrees of separation \cite{Travers1969}, that is, any one person is on average 6 connections away from any other. 

\subsection{ Reinforcement of Dominant Views}

Figure \ref{fig:partial_engagement} illustrates how $\SOMminus$ models capture the essence of the Spiral of Silence theory. In this strongly connected graph, a vocal minority effectively dominates the discourse, causing the silent majority to converge toward the perceived majority opinion.

The network has two groups: Group 1 (Agents 1, 2, 3, 6, 7, and 8) with opinions in the lower half of the spectrum, and Group 2 (Agents 4 and 5) with opposing views near the higher end. Despite Group 1 being the actual majority, the network topology and differing tolerance radii $\Tol$ lead to Group 2 dominating the opinion dynamics. Group 1 agents lack intra-group connections but are all linked to Group 2 agents. Also, Group 2's significantly larger tolerance radii allow them to remain non-silent more frequently.

\begin{figure}[htbp]
    \centering
    \begin{subfigure}[b]{0.45\textwidth}
        \centering
        \includegraphics[width=\textwidth]{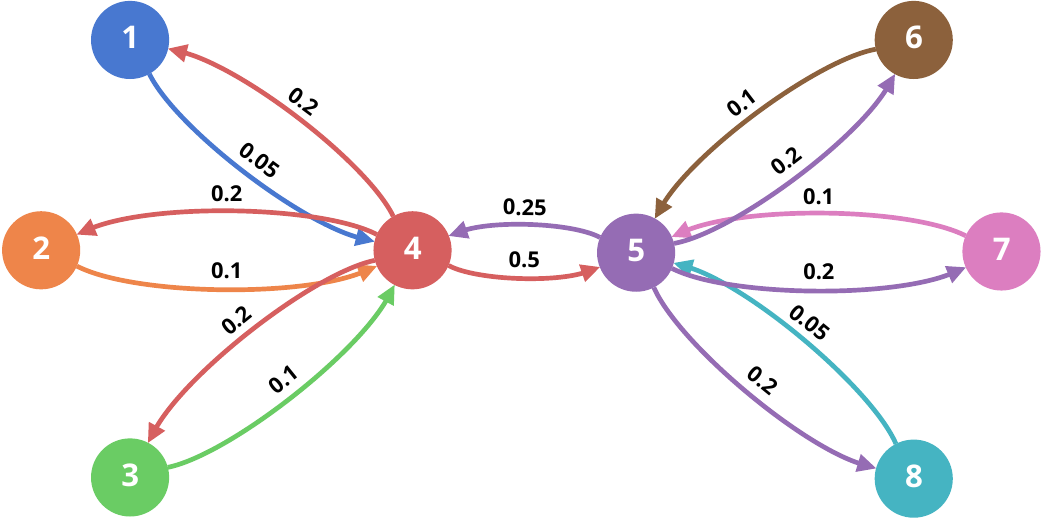}
        \caption{Aperiodic strongly connected influence graph (self influence excluded)}
        \label{fig:partial_engagement_graph}
    \end{subfigure}
    \hfill
    \begin{subfigure}[b]{0.5\textwidth}
        \centering
        \includegraphics[width=\textwidth]{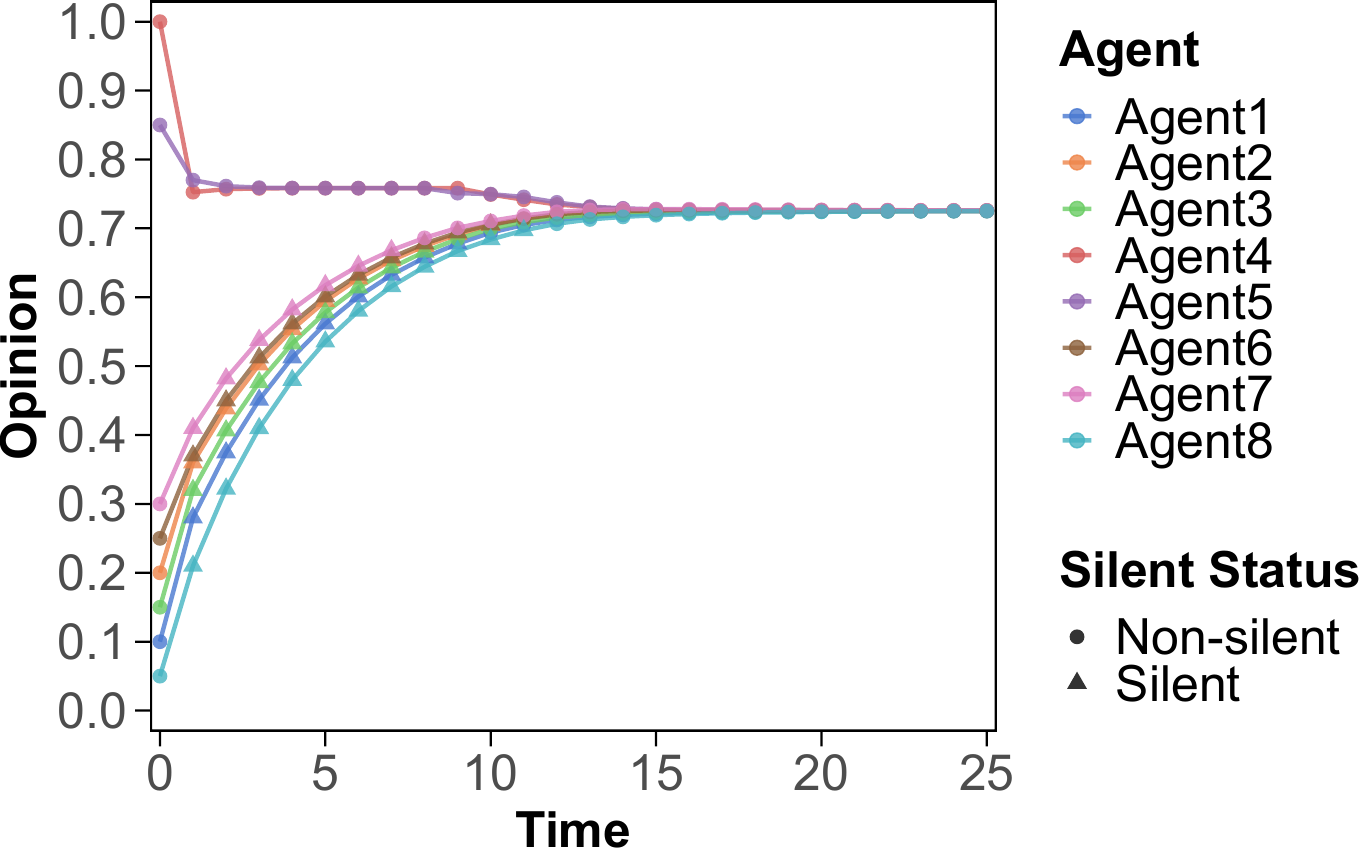}
        \caption{Each plot shows agents' state evolution over time.}
        \label{fig:partial_engagement_chart}
    \end{subfigure}
    \caption{Silent Majority vs. Vocal Minority: Opinion Dynamics in $\SOMminus$. Triangles represent silent agents, circles non-silent ones. Colored areas indicate opinion values within each agent's tolerance radius. Here $\BvInit = (0.1, 0.2, 0.15, 1.0, 0.85, 0.25, 0.3, 0.05)$ and $\Tol = (0.1, 0.05, 0.1, 0.85, 0.6, 0.05, 0.1, 0.05)$ and majority thresholds are set to $0.5$.}
    \label{fig:partial_engagement}
\end{figure}

This configuration results in the vocal minority (Group 2) disproportionately influencing the network. The silent majority (Group 1) remains quiet for most of the update process, only becoming active when opinions have already shifted closer to the perceived majority view. This example demonstrates how the $\SOMminus$ models can simulate scenarios where a minority opinion, through strategic positioning and persistent vocalization, can shape the overall opinion landscape, even when numerically outnumbered.

\subsection{\texorpdfstring{Hidden Consensus in $\SOMplus$ Models}{Hidden Consensus in SOMplus Models}}

$\SOMplus$ models reveal a noteworthy phenomenon: the possibility of reaching a consensus that remains undetected by the agents themselves. Figure \ref{fig:public_dissensus_opinion_consensus} illustrates this scenario using a clique graph with four agents.

Initially, the agents hold diverse opinions ($\BvInit=(1.0, 0.9, 0.1, 0.0)$). However, due to the graph's influence structure, all agents become silent after $t=0$. Despite this silence, their private opinions converge to a common value  over time. Crucially, this convergence occurs without any further public expression of opinions, leaving each agent unaware of the emerging consensus. This hidden consensus phenomenon mirrors real-world scenarios in social media where individuals may unknowingly share common views while perceiving disagreement due to outdated public expressions \cite{Hampton2014}. 

\begin{figure}[ht]
    \centering
    \hspace{0.0675\textwidth}
    \begin{subfigure}[t]{0.32\textwidth}
        \centering
        \includegraphics[width=\textwidth]{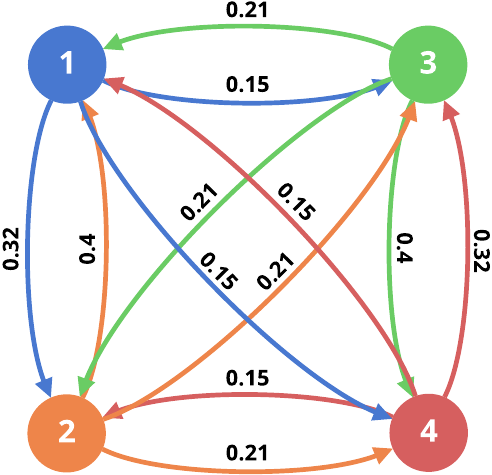}
        \caption{Influence graph self influence excluded}
        \label{fig:public_dissensus_opinion_consensus_graph}
    \end{subfigure}
    \hspace{0.05\textwidth}
    \begin{subfigure}[t]{0.5\textwidth}
        \centering
        \includegraphics[width=\textwidth]{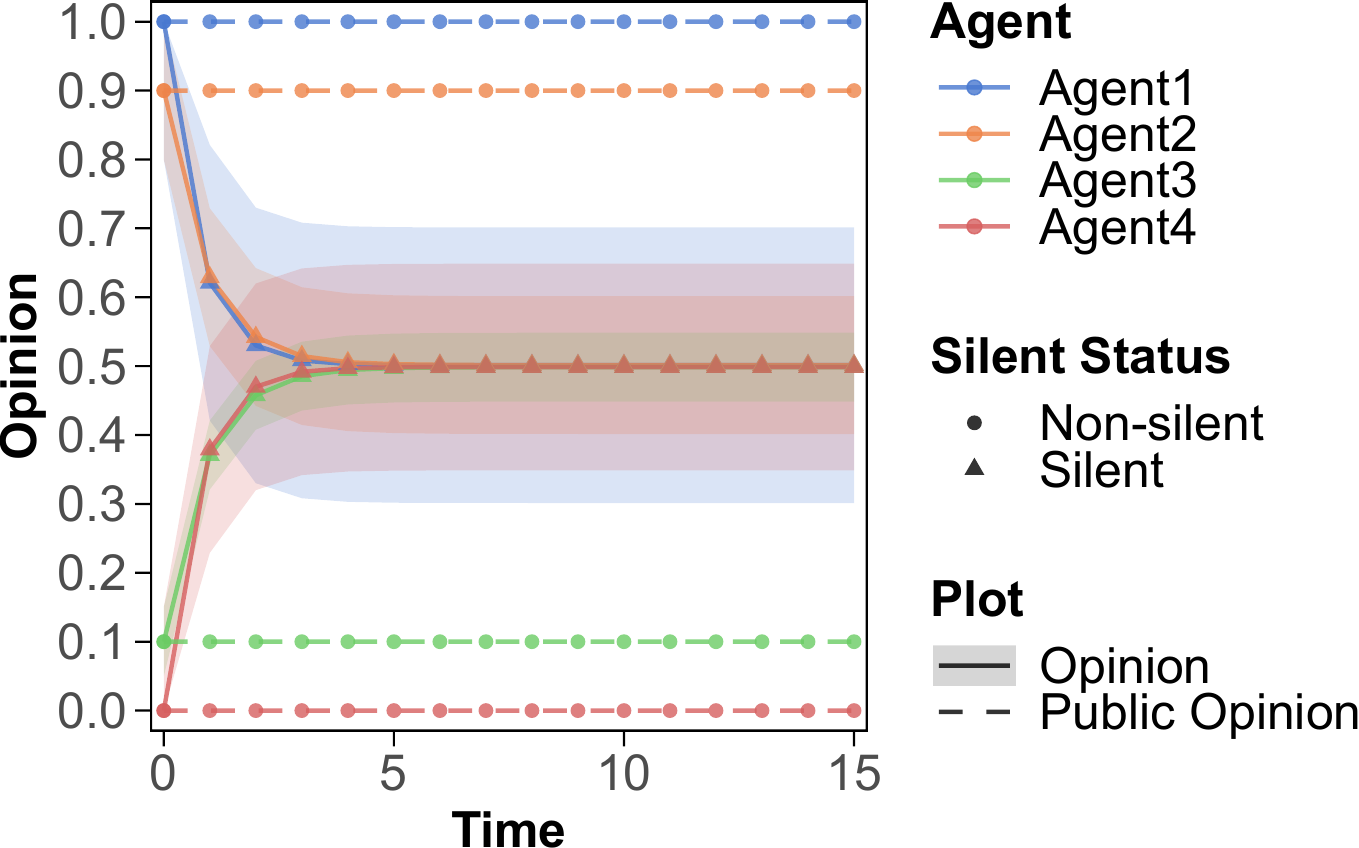}
        \caption{Each plot shows agents' state evolution over time.}
        \label{fig:public_dissensus_opinion_consensus_chart}
    \end{subfigure}
    \hspace{0.0675\textwidth}
    \caption{Silent Consensus Scenario. Triangles represent silent agents, circles non-silent ones. Colored areas indicate opinion values within each agent's tolerance radius. Initial state vector: $\BvInit=(1.0, 0.9, 0.1, 0.0)$; tolerance radii $\Tol=(0.2, 0.1, 0.05, 0.15)$; majority thresholds equal to $0.5$ for each agent.}
    \label{fig:public_dissensus_opinion_consensus}
\end{figure}

\subsection{Scaling to Large-Scale Network Simulations}
\label{subsec:scaling}

While small-scale simulations illustrate core mathematical behaviors, they do not capture the complexity of real-world social networks. To bridge this gap, we developed a scalable simulation platform supporting networks of over 2.1 million agents ($2^{21}$), enabling analysis of emergent phenomena in realistic settings.

Our implementation combines a modified preferential attachment algorithm \cite{Barabási&Albert2002} to generate networks with small-world properties \cite{Watts1998} and power-law (rich-get-richer) degree distributions. To address computational intensity, we employ parallel processing via Scala and Akka Actors \cite{Hewitt1973,Hewitt2010}, with results persisted in PostgreSQL for efficient querying. For each experiment, we generated networks with \emph{density} (minimum neighbor count) ranging from 1 to 15, executing 1,024 simulations per parameter combination. Consensus rates were then calculated across all runs for statistical significance. All agents have their tolerance radius and majority threshold set to $0.1$ and $0.5$, respectively, and their initial beliefs as random uniformly distributed values from 0 to 1. The code is available  \href{https://github.com/DavidGaona/belief_evolution_simulator.git}{here}.

\subsubsection{Contrasting Silence Opinion Models.}

\label{subsubsec:SOM-comparison-results}
The $\SOMplus$ and $\SOMminus$ models exhibit dramatically opposing behaviors in consensus formation as seen in Fig.\ref{fig:SOM-heatmap}. For $\SOMplus$ models, consensus becomes increasingly rare beyond 512 agents ($2^9$), occurring in only 0.097\% of simulations (1/1024 runs) at this threshold, exclusively for density~4. Beyond this scale, no simulations reached consensus. This aligns with social media dynamics where persistent disagreement emerges, as agents' historical opinions create perception gaps between private consensus and public expression \cite{Hampton2014}. Even a minimal notion of memory (retaining only  most recent public opinions) proves sufficient to sustain dissensus. 

\begin{figure}[htbp]
    \centering
    \includegraphics{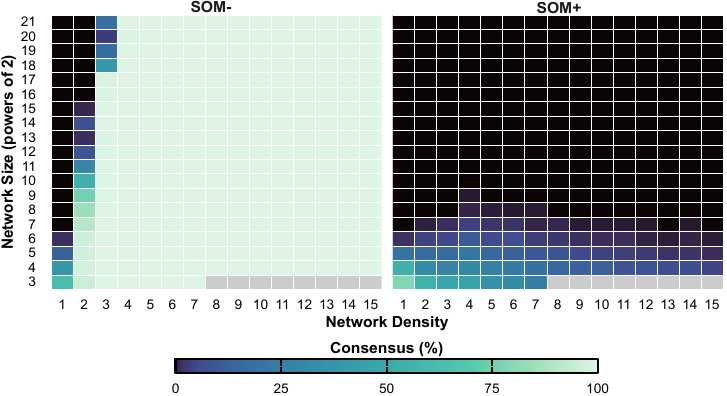}
    \caption{Heatmap indicating consensus \% across different network sizes (y-axis) and densities (x-axis) for both $\SOMplus$ and $\SOMminus$ models. Lighter hues indicate a higher rate of consensus, while darker hues indicate a lower rate. All tolerance radii and majority thresholds are equal to $0.1$ and $0.5$ resp.}
    \label{fig:SOM-heatmap}
\end{figure}

Conversely, $\SOMminus$ models exhibit the opposite scalability behavior. Once a certain density threshold is reached, consensus consistently emerges in all simulations. Larger networks require slightly higher connectivity: for 2,048 agents ($2^{11}$), a density of 3 suffices, while for 2.1 million agents ($2^{21}$), a density of 4 is needed. This highlights how the spiral of silence depends on network structure. In sparse networks, some agents may remain silent indefinitely due to critical bridge nodes (see Section~\ref{subsec:SOM-Dissensus}). In contrast, denser networks make it increasingly unlikely for bridges to be disconnected, since disrupting consensus would require silencing more than $d$ agents (for density $d$)—a rare occurrence in networks of this type.

    \section{Conclusions and Related Work}
    \label{sec:conclusion}
    We  extended the classical DeGroot framework to incorporate key social dynamics described by the Spiral of Silence. Our contributions highlight how the addition of memory, even in a limited form, as in $\SOMplus$ models, introduces significant complexity to consensus-building processes. We demonstrated that consensus, while achievable for cliques in $\SOMminus$ models, is no longer guaranteed, in contrast with  DeGroot models,  in arbitrary strongly-connected aperiodic graphs. This points to the impact that silence and memory can have on opinion formation in social networks. It also offers insights into the challenges of converging to consensus in real-world scenarios. We also discuss simulations reflecting predictions of the Spiral of Silence, such as the reinforcement of dominant views and hidden consensus. Finally, we discussed the simulation in our models of large-scale graphs reflecting the small-world topology of real social networks. 

\emph{Future Work}. Experimentation and validation on real social networks typically involve continuously collecting user opinions and inferring influence via methods such as stance detection and influencer identification from natural language processing, machine learning and graph mining.
We defer this challenging task to future work.


\emph{Related Work}. To our knowledge, no prior work has extended DeGroot-based models to incorporate the Spiral of Silence. Some recent studies have examined the Spiral of Silence in agent-based networks. E.g.,  \cite{Ross2019} models manipulative actors in social networks using agents with fixed binary opinions (agree/disagree) who choose whether to express them based on the prevailing opinion climate. In \cite{Cabrera2021}, the authors explore how the number of communities and their connectivity influence perceived opinion climates. However, these studies do not address opinion updates, convergence to consensus, or memory, which are central aspects of this paper.

In \cite{chistikov2020convergence,zehmakan2021majority} the stabilization property was studied in majority-based opinion models, where individuals adopt the most prevalent opinion in their social circles (their neighbors). Although these works focus on majority-based opinion dynamics, similar to models inspired by Spiral of Silence, their models are exclusively memoryless with only two-value discrete opinions.

Recent studies \cite{biondi2023dynamics,shirzadi2024stubborn,xu2022effectsstubbornnessopiniondynamics} explored conflict, disagreement, and polarization using the Friedkin and Johnsen (FJ) model. This model extended DeGroot’s by adding a fixed internal opinion and modeling stubbornness, where agents’ expressed opinions reflected their internal beliefs. While silent agents in $\SOMplus$ and stubborn agents in the FJ model shared some similarities, their behavior differed. In $\SOMplus$, silent agents could later become active, allowing their public opinions to change. In contrast, stubborn agents in the FJ model did not shift in this way. Thus, although neither model guaranteed consensus, their opinion dynamics diverged.

The exclusion of silent agents in $\SOMminus$ models amounts to having edges (influences) disappearing and reappearing during opinion evolution which reflects the dynamic influence nature of this model. There are several works studying dynamic influence in opinion formation. The work \cite{demarzo2003persuasion} introduces a version of the DeGroot model in which self-influence changes over time while the influence on others remains the same. The works \cite{chatterjee1977towards,Generalize2} explore convergence and stability, respectively, in models where influences change over time.  The work \cite{Aranda2024}   demonstrates how asynchronous communication, when combined with dynamic influence, can prevent consensus.  None of this work deals with the dynamics derived from the Spiral of Silence.

\newpage

    \bibliographystyle{splncs04}
    \bibliography{references.bib}
\newpage
    \appendix
    \label{Appendix}
    \section{Appendix}

\label{app:proofs}

\lemmaA*
\begin{proof}
    Let $\St{}{t}$ such that $\St{i}{t} = 0 \ \text{for all} \ i \in A$.  Then, by applying  Eq.~\ref{ML-sup:eq}, we can conclude that $\St{i}{t+1} = 1 \ \text{for all} \ i \in A$, given that $|\N{i}^t| = 0 \ \text{for all} \ i \in A$, 
\end{proof}

\lemmaB*
\begin{proof}
    From Definition \ref{definition:SOM-} and Eq.~\ref{MB-bup:eq_2} we know that:


$$ \Bt{i}{t+1} = (1- \sum_{j \in \N{i}} I_{ji}\cdot\St{j}{t})\cdot \Bt{i}{t}+ \sum_{j \in \N{i}} I_{ji} \cdot \St{j}{t} \cdot \Bt{j}{t} 
$$


Now, as $\Bt{k}{t} \leq max(\Bt{}{t})$  for all $k$ $\in$ $A$, we can conclude that:


    $$ 
        \Bt{i}{t+1} 
        \leq (1- \sum_{j \in \N{i}} I_{ji}\cdot\St{j}{t})\cdot max(\Bt{}{t})+ \sum_{j \in \N{i}} I_{ji} \cdot \St{j}{t} \cdot max(\Bt{}{t})
        = max(\Bt{}{t})
    $$


    Thus, $\Bt{i}{t+1} \leq max(\Bt{}{t})$ as wanted. The proof that $min(\Bt{}{t}) \leq \Bt{i}{t+1}$ is similar. 
\end{proof}

\lemmaD*
\begin{proof} 
 First, we prove that at each time unit $t$ where there is at least one non-silent agent, the distance between the maximum and minimum opinions will be reduced by at least a factor of  $(1-  I_{min}) $ at time $t +1$, i.e. $R_{t+1} \leq (1- I_{min} ) \cdot  R_{t}$  where $1- I_{min}$ is a positive constant below one.
 

We consider the following two cases for each unit time $t$ where there is at least one non-silent agent: 


\begin{enumerate}
\item At least one agent is silent. In this case, we consider a non-silent agent $k$, the updated opinion of every agent $i$ influenced by agent $k$ will have an upper bound, thus: 

$$\Bt{i}{t+1} = (1- \sum_{j \in \N{i}} I_{ji}\cdot\St{j}{t})\cdot \Bt{i}{t}+ \sum_{j \in \N{i}} I_{ji} \cdot \St{j}{t} \cdot \Bt{j}{t}  \leq (1-I_{min})  \cdot \max(\Bt{}{t})+ I_{min} \cdot \Bt{k}{t} $$

And the lower bound would be the following:

$$\Bt{i}{t+1} = (1- \sum_{j \in \N{i}} I_{ji}\cdot\St{j}{t})\cdot \Bt{i}{t}+ \sum_{j \in \N{i}} I_{ji} \cdot \St{j}{t} \cdot \Bt{j}{t} \geq (1- I_{min} )\cdot \min(\Bt{}{t}) + I_{min} \cdot \Bt{k}{t} $$

Therefore: 
\begin{equation}
(1-I_{min})  \cdot \max(\Bt{}{t})+ I_{min} \cdot \Bt{k}{t} \geq  \Bt{i}{t+1} \geq (1- I_{min} )\cdot \min(\Bt{}{t}) + I_{min} \cdot \Bt{k}{t} 
\label{degroot-upd:eqlemma4_1_1_1}
\end{equation}

As the graph is a clique and there is at least one silent agent,  every agent in the graph is influenced by agent $k$, including itself; notice that as there is at least one silent agent, call it $l$, then $(1- \sum_{j \in \N{i}} I_{ji})$  from Eq.~\ref{MB-bup:eq_2}, corresponding to the self-influence on agent $i$,  is at least $I_{li} \geq  I_{min}$. Therefore, all agents in the graph satisfy    Eq.~$\ref{degroot-upd:eqlemma4_1_1_1}$; then, there are both an  upper and lower  bound for all opinions in unit time $t+1$, including the maximum and the minimum opinions, thus:

\begin{equation}
\max(\Bt{}{t+1})  \leq (1-I_{min})  \cdot \max(\Bt{}{t})+ I_{min} \cdot \Bt{k}{t} 
\label{degroot-upd:eqlemma4_2}
\end{equation}

\begin{equation}
 \min(\Bt{}{t+1}) \geq (1- I_{min} )\cdot \min(\Bt{}{t}) + I_{min} \cdot \Bt{k}{t} 
\label{degroot-upd:eqlemma4_3}
\end{equation}



Then, as  $R_{t+1} = \max(\Bt{}{t+1}) - \min(\Bt{}{t+1}) $, $R_{t+1} \leq (1-I_{min})  \cdot \max(\Bt{}{t})+ I_{min} \cdot \Bt{k}{t}  - (1- I_{min} )\cdot \min(\Bt{}{t}) - I_{min} \cdot \Bt{k}{t}  = (1- I_{min} ) \cdot   (\max(\Bt{}{t}) - \min(\Bt{}{t})) = (1- I_{min} ) \cdot  R_{t}$.  Then  $R_{t+1} \leq (1- I_{min} ) \cdot  R_{t}$.

\item All agents speak at time $t$: as the graph is a clique with $ n \geq 3$, there must be at least one non-silent agent $k$ that is not the unique maximum or minimum opinion agent in the graph \footnote{Notice that this case is not applicable in the clique in Remark \ref{Remark:two-agent-clique}, as such a clique does not have a non-silent agent that is not the unique maximum or minimum opinion agent in the graph.}; agent $k$  influences all the other agents such that their opinions are bounded as follows:

$$\Bt{i}{t+1} = (1- \sum_{j \in \N{i}} I_{ji}\cdot\St{j}{t})\cdot \Bt{i}{t}+ \sum_{j \in \N{i}} I_{ji} \cdot \St{j}{t} \cdot \Bt{j}{t} \leq (1-I_{min})  \cdot \max(\Bt{}{t})+ I_{min} \cdot \Bt{k}{t} $$

$$\Bt{i}{t+1} = (1- \sum_{j \in \N{i}} I_{ji}\cdot\St{j}{t})\cdot \Bt{i}{t}+ \sum_{j \in \N{i}} I_{ji} \cdot \St{j}{t} \cdot \Bt{j}{t} \geq (1- I_{min} )\cdot \min(\Bt{}{t}) + I_{min} \cdot \Bt{k}{t} $$

As agent $k$ is influenced by an agent with maximum opinion and an agent with minimum opinion at time $t$, its opinion at time $t+1$ will be bounded as follows:

$$\Bt{k}{t+1} \geq (1- I_{min}) \cdot \min(\Bt{}{t}) + 
I_{min} \cdot \max(\Bt{}{t})$$

$$\Bt{k}{t+1} \leq (1- I_{min}) \cdot \max(\Bt{}{t}) + 
I_{min} \cdot \min(\Bt{}{t})$$

Therefore: 

$$\Bt{k}{t+1} \geq (1- I_{min}) \cdot \min(\Bt{}{t}) + 
I_{min} \cdot \max(\Bt{}{t}) \geq (1- I_{min}) \cdot \min(\Bt{}{t}) + 
I_{min} \cdot \Bt{k}{t}$$

and 

$$\Bt{k}{t+1} \leq (1- I_{min}) \cdot \max(\Bt{}{t}) + 
I_{min} \cdot \min(\Bt{}{t})  \leq (1- I_{min}) \cdot \max(\Bt{}{t}) + 
I_{min} \cdot \Bt{k}{t}$$

Therefore, every agent in the graph, including $k$, holds the following bounds:
 
\begin{equation}
(1-I_{min})  \cdot \max(\Bt{}{t})+ I_{min} \cdot \Bt{k}{t} \geq  \Bt{i}{t+1} \geq (1- I_{min} )\cdot \min(\Bt{}{t}) + I_{min} \cdot \Bt{k}{t} 
\label{degroot-upd:eqlemma4_1_1}
\end{equation}

Then, as  $R_{t+1} = \max(\Bt{}{t+1}) - \min(\Bt{}{t+1}) $, $R_{t+1} \leq (1-I_{min})  \cdot \max(\Bt{}{t})+ I_{min} \cdot \Bt{k}{t}  - (1- I_{min} )\cdot \min(\Bt{}{t}) - I_{min} \cdot \Bt{k}{t}  = (1- I_{min} ) \cdot   (\max(\Bt{}{t}) - \min(\Bt{}{t})) = (1- I_{min} ) \cdot  R_{t}$.  Then  $R_{t+1} \leq (1- I_{min} ) \cdot  R_{t}$.

\end{enumerate}

Thus, we have proved that $R_{t+1} \leq (1- I_{min} ) \cdot  R_{t}$ for any time unit $t$ where there is at least one non-silent agent. 

Finally, by using the above and  
Cor.~\ref{corollary:no-memoryless-model-silent-forever}, it can be proved, by induction,  that for all $m \in \mathbb{N}$, there exists  $t \in \mathbb{N}$ such that $R_{t} \leq R_{0} \cdot (1- I_{min})^m $.

\end{proof}

\theoremB*
\begin{proof}
    From Th.~\ref{theo:extremeLimits}, there exist $U,\,L \in [0,1]$ such that $U= \lim_{t \to \infty} max(\Bt{}{t})$ and $L =  \lim_{t \to \infty} min(\Bt{}{t})$. We now prove $U=L$ by contradiction. Suppose that $U \neq L$.  As $U$ can not be lower than $L$, therefore, $U - L > 0$.   From $\lim_{m \to \infty} R_{0} \cdot (1- I_{min})^m  = 0$ \footnote{Notice that $R_{0}$ and  $I_{min}$ are positive constants such that  $ 1 > (1 - I_{min}) > 0$.} and  Lem.~\ref{lemma:mepsilon} there must exists a time $t$ where $ R_{t}  <  (U- L)$,  but it is not possible as, from Th.~\ref{theo:extremeLimits}  and Cor. ~\ref{corollary:monotonicity-extremes}, $\min(\Bt{}{t})\leq  L $
     and $\max(\Bt{}{t})\geq  U $,  which is a contradiction. Therefore, $U=L$ and the model converges to consensus.
\end{proof}

\end{document}